\documentclass[10pt,conference]{IEEEtran}

\usepackage[english]{babel} 
\usepackage{amsmath}                
\usepackage{amsfonts}               
\usepackage{amssymb}                
\usepackage{amsopn}                 
\usepackage{bbm}                    
\usepackage{mathrsfs}               
\usepackage{calc}                   
\usepackage[dvips]{graphicx}        
\usepackage{epsfig}                
\usepackage{psfrag}                 
\usepackage[dvips]{color}           
\usepackage{fancyhdr}               
\usepackage{verbatim}               
\usepackage{exscale}                
\usepackage{macros}
%
\newtheorem{theorem}{Theorem}

\newtheorem{proposition}[theorem]{Proposition}

\title{Asymmetric Quantizers Are Better at Low SNR}


\author{\IEEEauthorblockN{Tobias Koch}\IEEEauthorblockA{University of Cambridge\\ Cambridge CB2 1PZ, UK\\ Email: tobi.koch@eng.cam.ac.uk} \and \IEEEauthorblockN{Amos Lapidoth}\IEEEauthorblockA{ETH Zurich\\ 8092 Zurich, Switzerland\\
    Email: lapidoth@isi.ee.ethz.ch}}

\date{}

\sloppy

\begin{document}

\maketitle

\begin{abstract}
  We study the behavior of channel capacity when a one-bit quantizer
  is employed at the output of the discrete-time average-power-limited Gaussian channel. We focus on the low signal-to-noise ratio regime, where communication at very low spectral
  efficiencies takes place, as in Spread-Spectrum and Ultra-Wideband communications. 
  It is well known that, in this regime, a
  symmetric one-bit quantizer reduces capacity by $2/\pi$, which
  translates to a power loss of approximately two decibels. Here we
  show that if an asymmetric one-bit quantizer is employed, and if
  asymmetric signal constellations are used, then these two
  decibels can be recovered in full. 
\renewcommand{\thefootnote}{}
  \footnote{The research leading to these results has received funding
    from the European Community's Seventh Framework Programme
    (FP7/2007-2013) under grant agreement No. 252663.}
\end{abstract}
\setcounter{footnote}{0}

\section{Introduction}
\label{sec:intro}
We study the effect on channel capacity of quantizing the output of
the discrete-time average-power-limited Gaussian channel using a
one-bit quantizer. We focus on the low signal-to-noise ratio
regime, where communication at very low spectral efficiencies takes place (as in
Spread-Spectrum and Ultra-Wideband communications).
%
%
%
In this regime, a symmetric one-bit quantizer reduces the capacity by
a factor of $2/\pi$, corresponding to a 2dB power loss
\cite{viterbiomura79}. Hence the rule of thumb that ``hard decisions
cause a 2dB power loss.'' Here we demonstrate that if we allow for
\emph{asymmetric one-bit quantizers} with corresponding
\emph{asymmetric signal constellations}, these two decibels can be
fully recovered. We further demonstrate that the capacity per
unit-energy can be achieved by a simple pulse-position modulation
(PPM) scheme.

The problem of output quantization is relevant for communication
systems where the receiver uses digital signal processing techniques,
which require the conversion of the analog received signal to a
digital signal by means of an analog-to-digital converter (ADC). For ADCs with high resolution, the effects of quantization are negligible. However, using a high-resolution ADC may not be practical, especially when the bandwidth of the communication system is large and the ADC therefore needs to operate at a high sampling rate \cite{walden99}. In this case a low-resolution ADC has to be employed. The capacity of the discrete-time Gaussian channel with one-bit output quantization indicates what communication rates can be achieved when the receiver employs a low-resolution ADC.

For a symmetric one-bit quantizer (which produces $1$ if the quantizer input is nonnegative and $-1$ otherwise), the capacity $C_{\textnormal{sym}}(\const{P})$ under the average-power constraint $\const{P}$ on the channel inputs is given by \cite[(3.4.18)]{viterbiomura79}, \cite[Th.~2]{singhdabeermadhow09_2}
\begin{equation}
\label{eq:sym}
C_{\textnormal{sym}}(\const{P}) = \log 2 - H_b\biggl(Q\Bigl(\sqrt{\const{P}/\sigma^2}\Bigr)\biggr)
\end{equation}
where $\log(\cdot)$ denotes the natural logarithm function; $\sigma^2$ the variance of the additive Gaussian noise; 
\begin{equation*}
H_b(p)\triangleq-p\log p - (1-p)\log(1-p), \quad 0 \leq p \leq 1 
\end{equation*}
(with $0\log 0\triangleq 0$) the binary entropy function; and 
\begin{equation*}
Q(x)\triangleq \frac{1}{\sqrt{2\pi}}\int_{x}^{\infty} \exp\left(-\xi^2/2\right)\d \xi, \quad x\in\Reals
\end{equation*}
the $Q$-function. (Here $\Reals$ denotes the set of real numbers.) This capacity can be achieved by transmitting $\pm\sqrt{\const{P}}$ equiprobably.

From \eqref{eq:sym}, the capacity per unit-energy $\dot{C}_{\textnormal{sym}}(0)$ for a symmetric one-bit quantizer can be computed as \cite[(3.4.20)]{viterbiomura79}
\begin{equation}
\dot{C}_{\textnormal{sym}}(0) = \lim_{\const{P}\downarrow 0} \frac{C_{\textnormal{sym}}(\const{P})}{\const{P}} = \frac{1}{\pi\sigma^2}.
\end{equation}
This is a factor of $\frac{2}{\pi}$ smaller than the capacity per-unit energy $\frac{1}{2\sigma^2}$ of the Gaussian channel without output quantization \cite{shannon48}. Thus, quantizing the channel output by a symmetric one-bit quantizer causes a loss of roughly 2dB. 


It is tempting to attribute this loss to the fact that the quantizer discards information on the received signal's magnitude and allows the decoder to perform only hard-decision decoding. However, we demonstrate that the loss of 2dB is not a consequence of the hard-decision decoder but of the suboptimal quantizer. In fact, with an asymmetric quantizer the loss vanishes.

The rest of the paper is organized as follows. Section~\ref{sec:channel} describes the considered channel model, defines the capacity per unit-energy, and presents our main result. Section~\ref{sec:mainproof} provides the proof of this result. Section~\ref{sec:ppm} shows that the capacity per unit-energy can be achieved by a simple PPM scheme. Section~\ref{sec:capacity} briefly discusses the capacity of the considered channel. And Section~\ref{sec:summary} concludes the paper with a summary and discussion of our results.

\section{Channel Model and Capacity Per Unit-Energy}
\label{sec:channel}
\begin{figure*}[t]
\centering
\psfrag{X}[cb][cb]{$X_k$}
\psfrag{M}[rc][rc]{$M$}
 \psfrag{Z}[cc][cc]{$Z_k$}
\psfrag{encoder}[cc][cc]{encoder}
\psfrag{decoder}[cc][cc]{decoder}
\psfrag{quantizer}[cc][cc]{quantizer}
\psfrag{Y}[cb][cb]{$Y_k$}
\psfrag{N}[lc][lc]{$\hat{M}$}
\psfrag{U}[cb][cb]{$\tilde{Y}_k$}
\epsfig{file=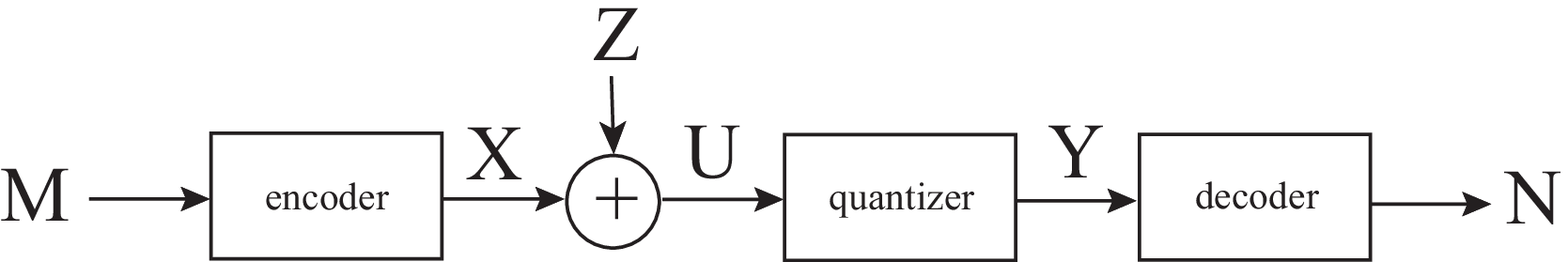, width=0.7\textwidth}
 \caption{System model.}
 \label{fig1}
\end{figure*}

We consider the discrete-time communication system depicted in Figure 1. It is assumed that the message $M$ is uniformly distributed over the set $\{1,2,\ldots,\const{M}\}$. The encoder maps $M$ to the length-$n$ sequence $X_1,X_2,\ldots,X_n$, which is then corrupted by additive Gaussian noise and quantized. Thus, at every time instant $k\in\Integers$ (where $\Integers$ denotes the set of integers), the received signal $\tilde{Y}_k$ corresponding to the channel input $x_k\in\Reals$ is
\begin{equation}
\label{eq:channel}
\tilde{Y}_k = x_k + Z_k, \quad k\in\Integers
\end{equation}
where  $\{Z_k,\,k\in\Integers\}$ is a sequence of independent and identically distributed (i.i.d.) Gaussian random variables of zero mean and variance $\sigma^2$. The quantizer produces $1$ if $\tilde{Y}_k$ is in the quantization region $\set{D}$ and $-1$ otherwise, for some Borel set $\set{D}\subset\Reals$. For example, the quantization region of the symmetric quantizer is given by $\set{D} = \{\tilde{y}\in\Reals\colon \tilde{y}\geq 0\}$. The decoder observes the quantizer's outputs $Y_1,Y_2,\ldots,Y_n$ and tries to guess which message was transmitted.

We assume that the energy in the transmitted sequence does not exceed
$\const{E}$, i.e., that the encoder is such that, for every realization of
$M$, the sequence $x_1,x_2,\ldots,x_n$ satisfies
\begin{equation}
\label{eq:power}
\sum_{k=1}^n x^2_k \leq \const{E}.
\end{equation}
We define the capacity per unit-energy along the lines of
\cite{verdu90}. We say that a \emph{rate per unit-energy} $\dot{R}(0)$
(in nats per energy) is \emph{achievable} if for every $\eps>0$ there
exists an encoder, a quantizer, and a decoder such that
\begin{equation*}
\frac{\log\const{M}}{\const{E}} > \dot{R}(0)-\eps
\end{equation*}
and such that the probability of error $\Prob(\hat{M}\neq M)$ tends to zero as $\const{E}$ tends to infinity. The \emph{capacity per unit-energy} is the supremum of all achievable rates per unit-energy.

It follows from \cite[Ths.~2 \& 3]{verdu90} that the capacity per unit-energy is given by\footnote{It is straightforward to incorporate the additional maximization over all possible quantizers into the proofs of \cite[Ths.~2 \& 3]{verdu90}.}
\begin{IEEEeqnarray}{lClCl}
\dot{C}(0) & = & \sup_{\const{P}>0} \frac{C(\const{P})}{\const{P}} & = & \sup_{\xi\neq 0, \set{D}\subset\Reals} \frac{D\bigl(P_{Y_1 | X_1=\xi} \bigm\| P_{Y_1 | X_1=0}\bigr)}{\xi^2}\IEEEeqnarraynumspace\label{eq:formulaCUC}
\end{IEEEeqnarray}
where $D(\cdot\|\cdot)$ denotes the relative entropy, i.e.,
\begin{equation*}
D(P \| Q) = \left\{\begin{array}{cl}\displaystyle \int \log\left(\frac{\d P}{\d Q}\right)\d P, \quad & \textnormal{if }P\ll Q\\ \infty, \quad &\textnormal{otherwise}\end{array}\right.
\end{equation*}
(where $P\ll Q$ indicates that $P$ is absolutely continuous with respect to $Q$); $P_{Y_1|X_1=x}$ denotes the output distribution given that the input is $x$; and $C(\const{P})$ is the capacity of the above channel under the average-power constraint $\const{P}$ on the channel inputs, i.e., \cite{gallager68}
\begin{equation}
C(\const{P}) = \sup I(X_1;Y_1)
\end{equation} 
with the maximization being over all quantization regions $\set{D}$ and over all distributions on $X_1$ satisfying $\E{X_1^2}\leq\const{P}$. Note that the first supremum in \eqref{eq:formulaCUC} is approached when $\const{P}$ tends to zero. Thus, the capacity per unit-energy is equal to the slope of the capacity-vs-power curve at zero.

By the Data Processing Inequality \cite[Th.~2.8.1]{coverthomas91}, the
capacity per unit-energy of the above channel is upper-bounded by the
capacity per unit-energy of the Gaussian channel without output
quantization, i.e., \cite{shannon48}
\begin{equation}
\label{eq:CUCUB}
\dot{C}(0) \leq \frac{1}{2\sigma^2}.
\end{equation}
In the following, we show that there exists a one-bit quantizer that achieves this upper bound.

\begin{theorem}[Main Result]
\label{thm:main}
The capacity per unit-energy of the discrete-time Gaussian channel with one-bit output quantization is
\begin{equation}
\label{eq:mainCUC}
\dot{C}(0) = \frac{1}{2\sigma^2}.
\end{equation}
Moreover, the capacity per unit-energy can be achieved by a quantization region of the form
\begin{equation}
\label{eq:quantregion}
\set{D} = \{\tilde{y}\in\Reals\colon \tilde{y}\geq\Upsilon\},\quad \textnormal{for some $\Upsilon\in\Reals$}
\end{equation}
where $\Upsilon$ depends on the distribution of the channel input. 
\end{theorem}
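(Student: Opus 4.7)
The plan is to combine the data-processing upper bound $\dot{C}(0)\leq 1/(2\sigma^2)$ from \eqref{eq:CUCUB} with a matching achievability argument built on the variational formula \eqref{eq:formulaCUC}. Concretely, I will exhibit a one-parameter family of threshold quantizers $\set{D}_t=\{\tilde{y}\in\Reals\colon\tilde{y}\geq\Upsilon_t\}$ and input levels $\xi_t\neq 0$ such that
\begin{equation*}
\frac{D\bigl(P_{Y_1|X_1=\xi_t}\bigm\|P_{Y_1|X_1=0}\bigr)}{\xi_t^2}\longrightarrow\frac{1}{2\sigma^2}.
\end{equation*}

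For a threshold quantizer with threshold $\Upsilon$, the output given input $x$ is Bernoulli with success probability $Q((\Upsilon-x)/\sigma)$. Abbreviating $p_0=Q(\Upsilon/\sigma)$ and $p_\xi=Q((\Upsilon-\xi)/\sigma)$, the relative entropy reduces to the binary divergence
\begin{equation*}
D=p_\xi\log\frac{p_\xi}{p_0}+(1-p_\xi)\log\frac{1-p_\xi}{1-p_0}.
\end{equation*}

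The idea is \emph{flash} signaling realized at the quantizer: let both $\Upsilon_t$ and $\xi_t$ tend to infinity while keeping $\xi_t$ only slightly larger than $\Upsilon_t$, for instance $\Upsilon_t=t$ and $\xi_t=t+\sqrt{t}$. Then $p_0\downarrow 0$ and, since $\xi_t-\Upsilon_t\to\infty$, also $1-p_\xi\downarrow 0$ (super-polynomially fast in $\Upsilon_t$). Using the standard asymptotic $Q(x)\sim\bigl(x\sqrt{2\pi}\bigr)^{-1}e^{-x^2/2}$ as $x\to\infty$, one obtains $-\log p_0\sim\Upsilon_t^2/(2\sigma^2)$, so the dominant contribution to $D$ is $p_\xi\log(1/p_0)\sim\Upsilon_t^2/(2\sigma^2)$. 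Since $\xi_t^2=\Upsilon_t^2+O(\Upsilon_t^{3/2})\sim\Upsilon_t^2$, the ratio $D/\xi_t^2$ tends to $1/(2\sigma^2)$, matching the upper bound. The structural claim \eqref{eq:quantregion} is then automatic, since the achieving family $\set{D}_t$ is already of that form, with the threshold $\Upsilon_t$ tuned to the transmitted level $\xi_t$.

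The main obstacle will be bookkeeping the subleading terms. Specifically, one must verify that both $(1-p_\xi)\log\bigl((1-p_\xi)/(1-p_0)\bigr)$ and $p_\xi\log p_\xi$ are $o(\Upsilon_t^2)$, and simultaneously that the separation $\xi_t-\Upsilon_t$ is chosen large enough to drive $1-p_\xi$ rapidly to zero yet small enough that $\xi_t^2/\Upsilon_t^2\to 1$. The choice $\xi_t-\Upsilon_t=\sqrt{\Upsilon_t}$ achieves both at once: it produces a Gaussian-tail decay for $1-p_\xi$ that swamps any polynomial growth factor arising from $1/(1-p_0)$ or from $\log p_\xi$, while contributing only an $O(\Upsilon_t^{3/2})$ perturbation to $\xi_t^2$.
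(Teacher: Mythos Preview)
Your proposal is correct and follows essentially the same approach as the paper: use a threshold quantizer, drive both $\Upsilon$ and $\xi$ to infinity with $\xi-\Upsilon\to\infty$, so that $p_\xi\to 1$ and the dominant term $p_\xi\log(1/p_0)\sim\Upsilon^2/(2\sigma^2)$ matches $\xi^2/(2\sigma^2)$ via the Gaussian tail bound \eqref{eq:QUB}. The only difference is that the paper sets $\Upsilon=\xi-\mu$ with $\mu$ fixed, first lets $\xi\to\infty$ (obtaining the bound $Q(-\mu/\sigma)/(2\sigma^2)$), and then sends $\mu\to\infty$, whereas you take a diagonal limit with $\xi_t-\Upsilon_t=\sqrt{t}$; this is a cosmetic variation, and your single-limit bookkeeping is sound.
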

\begin{proof}
See Section~\ref{sec:mainproof}.
\end{proof}
We prove Theorem~\ref{thm:main} in Section~\ref{sec:mainproof}. A simple PPM scheme that achieves the capacity per unit-energy \eqref{eq:mainCUC} is presented in Section~\ref{sec:ppm}.

\section{Proof of Theorem~\ref{thm:main}}
\label{sec:mainproof}
We show that a one-bit quantizer with quantization region~$\set{D}$ of the form \eqref{eq:quantregion} achieves the rate per unit-energy
\begin{equation}
\label{eq:CUCLB}
\dot{R}(0) \geq \frac{1}{2\sigma^2}.
\end{equation}
Together with \eqref{eq:CUCUB}, this proves Theorem~\ref{thm:main}. 

To this end, we first note that, for the quantization region \eqref{eq:quantregion}, the conditional probability of the output $Y_1$ given the input $x$ is 
\begin{IEEEeqnarray}{lCl}
P\bigl(Y_1=1\bigm| X_1=x) 
& = & Q\left(\frac{\Upsilon-x}{\sigma}\right), \quad x\in\Reals
\end{IEEEeqnarray}
and $P\bigl(Y_1=-1\bigm|X_1=x\bigr)=1-P\bigl(Y_1=1\bigm| X_1=x)$. Together with \eqref{eq:formulaCUC}, this yields
\begin{IEEEeqnarray}{lCl}
\dot{R}(0) & = & \sup_{\xi\neq 0, \Upsilon\in\Reals} \left\{\frac{Q\left(\frac{\Upsilon-\xi}{\sigma}\right)\log\frac{Q\left(\frac{\Upsilon-\xi}{\sigma}\right)}{Q\left(\frac{\Upsilon}{\sigma}\right)}}{\xi^2}\right.\nonumber\\
& & \left.\vphantom{\frac{Q\left(\frac{\Upsilon-\xi}{\sigma}\right)\log\frac{Q\left(\frac{\Upsilon-\xi}{\sigma}\right)}{Q\left(\frac{\Upsilon}{\sigma}\right)}}{\xi^2}}\!\!\!\quad\qquad\qquad {} +\frac{\left[1-Q\left(\frac{\Upsilon-\xi}{\sigma}\right)\right]\log\frac{1-Q\left(\frac{\Upsilon-\xi}{\sigma}\right)}{1-Q\left(\frac{\Upsilon}{\sigma}\right)}}{\xi^2}\right\}\nonumber\\
& = & \sup_{\xi\neq 0, \Upsilon\in\Reals} \left\{\frac{Q\left(\frac{\Upsilon-\xi}{\sigma}\right)\log\frac{1}{Q\left(\frac{\Upsilon}{\sigma}\right)}}{\xi^2}\right.\nonumber\\
& & \!\!\!\quad\qquad\qquad {} + \frac{\left[1-Q\left(\frac{\Upsilon-\xi}{\sigma}\right)\right]\log\frac{1}{1-Q\left(\frac{\Upsilon}{\sigma}\right)}}{\xi^2}\nonumber\\
& & \qquad\qquad\qquad\qquad\qquad\left.{}\vphantom{\frac{Q\left(\frac{\Upsilon-\xi}{\sigma}\right)\log\frac{1}{Q\left(\frac{\Upsilon}{\sigma}\right)}}{\xi^2}} - \frac{H_b\left(Q\left(\frac{\Upsilon-\xi}{\sigma}\right)\right)}{\xi^2}\right\}.\IEEEeqnarraynumspace\label{eq:1}
\end{IEEEeqnarray}
We choose $\Upsilon=\xi-\mu$ for some fixed $\mu\in\Reals$ and lower-bound the right-hand side (RHS) of \eqref{eq:1} by letting $\xi$ tend to infinity. This yields for the last two terms on the RHS of \eqref{eq:1}
\begin{equation}
\label{eq:2}
\lim_{\xi\to\infty} \frac{H_b\left(Q\left(-\frac{\mu}{\sigma}\right)\right)}{\xi^2} = 0
\end{equation}
and
\begin{equation}
\label{eq:3}
\lim_{\xi\to\infty} \frac{\left[1-Q\left(-\frac{\mu}{\sigma}\right)\right]\log\frac{1}{1-Q\left(\frac{\xi-\mu}{\sigma}\right)}}{\xi^2} = 0.
\end{equation}
We use the upper bound on the $Q$-function \cite[Prop.~19.4.2]{lapidoth09} 
\begin{equation}
\label{eq:QUB}
Q(\alpha) < \frac{1}{\sqrt{2\pi}\alpha}e^{-\alpha^2/2}, \quad \alpha > 0
\end{equation}
to lower-bound the first term on the RHS of \eqref{eq:1} as
\begin{IEEEeqnarray}{lCl}
\IEEEeqnarraymulticol{3}{l}{\lim_{\xi\to\infty} \frac{Q\left(-\frac{\mu}{\sigma}\right)\log\frac{1}{Q\left(\frac{\xi-\mu}{\sigma}\right)}}{\xi^2}}\nonumber\\
\quad & \geq & Q\left(-\frac{\mu}{\sigma}\right) \lim_{\xi\to\infty} \frac{\frac{1}{2}\log(2\pi)+\log\frac{\xi-\mu}{\sigma}+\frac{(\xi-\mu)^2}{2\sigma^2}}{\xi^2}\nonumber\\
& = & Q\left(-\frac{\mu}{\sigma}\right)\frac{1}{2\sigma^2}.\label{eq:4}
\end{IEEEeqnarray}
Combining \eqref{eq:2}, \eqref{eq:3}, and \eqref{eq:4} with \eqref{eq:1} yields
\begin{equation}
\dot{R}(0) \geq Q\left(-\frac{\mu}{\sigma}\right)\frac{1}{2\sigma^2}
\end{equation}
from which we obtain \eqref{eq:CUCLB} by letting $\mu$ tend to infinity. This proves Theorem~\ref{thm:main}.

\section{Pulse-Position Modulation}
\label{sec:ppm}
The capacity per unit-energy \eqref{eq:mainCUC} can be achieved by the following PPM scheme. For each message $m\in\{1,2,\ldots,\const{M}\}$, the encoder produces the sequence $x_1(m),x_2(m),\ldots,x_{\const{M}}(m)$, where
\begin{equation}
x_k(m) = \left\{\begin{array}{ll} \xi, \quad & k=m\\ 0, \quad & k\neq m \end{array}\right.
\end{equation}
and where $\xi$ satisfies the energy constraint \eqref{eq:power} with equality. Thus, we have $\xi^2=\const{E}$, which for a fixed rate per unit-energy $\dot{R}(0) = \frac{\log\const{M}}{\const{E}}$ is equal to
\begin{equation}
\label{eq:xi}
\xi^2 = \frac{\log\const{M}}{\dot{R}(0)}.
\end{equation}
Note that, while the \emph{rate per unit-energy} is fixed, the \emph{rate} of this scheme is  $\frac{\log\const{M}}{\const{M}}$ and tends to zero as $\const{M}$ tends to infinity. Nevertheless, by \eqref{eq:formulaCUC} the capacity per unit-energy is equal to the slope of the capacity-vs-power curve at zero. It thus follows from Theorem~\ref{thm:main} that there also exists a transmission scheme of nonzero rate that achieves \eqref{eq:mainCUC}.

We employ a quantizer with quantization region
\begin{equation*}
\set{D} = \{\tilde{y}\in\Reals\colon \tilde{y}\geq\Upsilon\}
\end{equation*}
 i.e., at every time instant $k$ the quantizer produces $1$ if \mbox{$\tilde{Y}_k\geq\Upsilon$} and $-1$ otherwise. We choose the threshold $\Upsilon$ such that the probability that the quantizer produces $-1$ given that the transmitter sends $\xi$ is equal to some arbitrary $0<\eps< 1$, i.e.,
\begin{equation}
\Upsilon = \xi -\sigma Q^{-1}(\eps)
\end{equation}
which yields
\begin{equation*}
P\bigl(Y_k=-1 \bigm| X_k=\xi\bigr) = Q\left(\frac{\xi-\Upsilon}{\sigma}\right) = \eps.
\end{equation*}
Here $Q^{-1}(\cdot)$ denotes the inverse $Q$-function. Note that this threshold is of the same form as the threshold we chose in Section~\ref{sec:mainproof} to prove Theorem~\ref{thm:main}.

The decoder guesses $\hat{M}=m$ if $Y_m=1$ and $Y_k=-1$ for $k\neq m$. If $Y_k=1$ for more than one $k$, or if $Y_k=-1$ for all $k=1,2,\ldots,\const{M}$, then the decoder declares an error.

Suppose that message $M=m$ was transmitted. The probability of an error is given by
\begin{IEEEeqnarray}{lCl}
\IEEEeqnarraymulticol{3}{l}{\Prob\bigl(\textnormal{error} \bigm| M=m\bigr)}\nonumber\\
\quad & = & \Prob\left(\left.\bigcup_{k\neq m}(Y_k=1)\cup (Y_m=-1)\right| M=m\right)\nonumber\\
& \leq & \sum_{k\neq m} P\bigl(Y_k=1\bigm| X_k=0\bigr) + P\bigl(Y_m=-1\bigm|X_m=\xi\bigr) \nonumber\\
& = & \sum_{k\neq m} P\bigl(Y_k=1\bigm|X_k=0\bigr) + \eps \nonumber\\
& = & (\const{M}-1)\,  P\bigl(Y_1=1\bigm|X_1=0\bigr) + \eps \label{eq:union}
\end{IEEEeqnarray}
where the second step follows from the union bound; the third step follows from our choice of $\Upsilon$; and the fourth step follows because the channel is memoryless which implies that the probability $\Prob(Y_k=1|X_k=0)$ does not depend on $k$. Since the RHS of \eqref{eq:union} does not depend on $m$, it follows that the probability of error
\begin{equation*}
\Prob(\hat{M}\neq M) = \frac{1}{\const{M}}\sum_{m=1}^{\const{M}} \Prob\bigl(\textnormal{error} \bigm| M=m \bigr)
\end{equation*}
is also upper-bounded by \eqref{eq:union}.

The first term on the RHS of \eqref{eq:union} can be evaluated as
\begin{IEEEeqnarray}{lCl}
\IEEEeqnarraymulticol{3}{l}{(\const{M}-1)\, P\bigl(Y_1=1\bigm|X_1=0\bigr)}\nonumber\\
\quad & = & (\const{M}-1) \,Q\left(\frac{\xi-\sigma Q^{-1}(\eps)}{\sigma}\right) \nonumber\\
& = & (\const{M}-1) \,Q\left(\frac{\sqrt{\log\const{M}}-\sigma Q^{-1}(\eps)\sqrt{\dot{R}(0)}}{\sigma\sqrt{\dot{R}(0)}}\right) \label{eq:P10}
\end{IEEEeqnarray}
where the second step follows from \eqref{eq:xi}. We continue by showing that if
\begin{equation*}
\dot{R}(0) < \frac{1}{2\sigma^2}
\end{equation*}
then, for every fixed $0<\eps<1$, the RHS of \eqref{eq:P10} tends to zero as $\const{M}$ tends to infinity.  Indeed, we have
\begin{IEEEeqnarray}{lCl}
\IEEEeqnarraymulticol{3}{l}{\lim_{\const{M}\to\infty}(\const{M}-1) \,Q\left(\frac{\sqrt{\log\const{M}}-\sigma Q^{-1}(\eps)\sqrt{\dot{R}(0)}}{\sigma\sqrt{\dot{R}(0)}}\right)}\nonumber\\
\,\,\, & \leq & \lim_{\alpha\to\infty} \exp\left(\sigma^2\dot{R}(0)\left(\alpha+Q^{-1}(\eps)\right)^2\right) Q(\alpha) \nonumber\\
& \leq & \lim_{\alpha\to\infty}\frac{1}{\sqrt{2\pi}\alpha}\exp\left(\sigma^2\dot{R}(0)\left(\alpha+Q^{-1}(\eps)\right)^2-\frac{1}{2}\alpha^2\right)\IEEEeqnarraynumspace\label{eq:Mlarge}
\end{IEEEeqnarray}
where the first step follows by upper-bounding $\const{M}-1 < \const{M}$ and by substituting
\begin{equation*}
\alpha=\frac{\sqrt{\log\const{M}}-\sigma Q^{-1}(\eps)\sqrt{\dot{R}(0)}}{\sigma \sqrt{\dot{R}(0)}};
\end{equation*}
and the second step follows from \eqref{eq:QUB}. The limit on the RHS of \eqref{eq:Mlarge} vanishes for $\dot{R}(0)<\frac{1}{2\sigma^2}$.

Combining \eqref{eq:Mlarge} with \eqref{eq:union}, we obtain that if $\dot{R}(0)<\frac{1}{2\sigma^2}$, then the probability of error tends to $\eps$ as $\const{E}$---and hence also $\const{M}=\exp\bigl(\const{E}\dot{R}(0)\bigr)$---tends to infinity. Since $\eps$ is arbitrary, it follows that the probability of error can be made arbitrarily small by choosing $\Upsilon$ sufficiently large, thus proving that the capacity per unit-energy \eqref{eq:mainCUC} is achievable by the above PPM scheme.

\section{Channel Capacity}
\label{sec:capacity}
The definition of channel capacity is analog to that of capacity per unit-energy. A rate $R(\const{P})$ is said to be achievable if for every $\eps>0$ there exists an encoder, a quantizer, and a decoder satisfying
\begin{equation*}
\frac{1}{n} \sum_{k=1}^n x_k^2 \leq \const{P}, \quad \text{for every realization of $M$}
\end{equation*}
such that
\begin{equation*}
\frac{\log\const{M}}{n} > R(\const{P}) - \eps
\end{equation*}
and such that the probability of error $\Prob\bigl(\hat{M}\neq M\bigr)$ tends to zero as $n$ tends to infinity. The capacity $C(\const{P})$ is the supremum of all achievable rates. For the above channel, the capacity is given by \cite{gallager68}
\begin{equation}
\label{eq:capacity}
C(\const{P}) = \sup I(X_1;Y_1)
\end{equation}
where the supremum is over all quantization regions $\set{D}\subset\Reals$ and over all distributions on $X_1$ satisfying $\E{X_1^2}\leq\const{P}$.

If we do not maximize over the quantization region but assume a symmetric quantizer, i.e., $\set{D}=\{\tilde{y}\in\Reals\colon \tilde{y}\geq 0\}$, then the capacity is given by  \cite[(3.4.18)]{viterbiomura79}, \cite[Th.~2]{singhdabeermadhow09_2}
\begin{equation}
C_{\text{sym}}(\const{P}) = \log 2 - H_b\biggl(Q\Bigl(\sqrt{P/\sigma^2}\Bigr)\biggr)
\end{equation}
where $H_b(\cdot)$ denotes the binary entropy function and $Q(\cdot)$ denotes the $Q$-function, see Section~\ref{sec:intro}. In this case the capacity-achieving input distribution is binary with equiprobable mass points at $\sqrt{\const{P}}$ and $-\sqrt{\const{P}}$.

To the best of our knowledge, the capacity of the above channel (maximized over the input distribution and the quantization region) as well as the capacity-achieving input distribution and the optimal quantization region are unknown. The following two propositions present results on the latter two problems.

\begin{proposition}
\label{prop:input}
The capacity-achieving input distribution is discrete with at most three mass points $\xi_{\ell}\in\Reals$, $\ell=1,2,3$ satisfying
\begin{equation}
\sum_{\ell=1}^3 p_{\ell} \xi_{\ell} = 0 \quad \text{and} \quad \sum_{\ell=1}^3 p_{\ell} \xi^2_{\ell} = \const{P}
\end{equation}
where $p_{\ell}$ denotes the probability corresponding to mass point $\xi_{\ell}$, i.e., $0\leq p_{\ell}\leq 1$, $\ell=1,2,3$ and $p_1+p_2+p_3=1$.
\end{proposition}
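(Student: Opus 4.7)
The plan is a Carath\'eodory-type argument combined with the translation symmetry of the channel. For a fixed quantization threshold $\Upsilon$, the output $Y_1$ is binary, so the mutual information depends on the input distribution $F$ only through the two functionals
\begin{equation*}
q(F)=\int Q\!\left(\frac{\Upsilon-x}{\sigma}\right)dF(x),\quad h(F)=\int H_b\!\left(Q\!\left(\frac{\Upsilon-x}{\sigma}\right)\right)dF(x),
\end{equation*}
via $I(X_1;Y_1)=H_b(q(F))-h(F)$, while the power constraint becomes $r(F)\triangleq\int x^2\,dF(x)\le\const{P}$. Consider then the map $F\mapsto(q(F),h(F),r(F))\in\Reals^3$; its range equals the convex hull of the continuous one-parameter curve
\begin{equation*}
\mathcal{C}_\Upsilon=\left\{\bigl(Q((\Upsilon-x)/\sigma),\,H_b(Q((\Upsilon-x)/\sigma)),\,x^2\bigr):x\in\Reals\right\},
\end{equation*}
which, being the continuous image of $\Reals$, is connected. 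By the Fenchel--Bunt refinement of Carath\'eodory's theorem, every point in the convex hull of a connected subset of $\Reals^3$ is a convex combination of at most three points of that subset. Hence any capacity-achieving triple $(q,h,r)$ can be realized by a distribution $F^*$ supported on at most three mass points, and since this bound is independent of $\Upsilon$ it survives the outer supremum over the quantization threshold.

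For the two moment identities I would invoke the translation symmetry of the channel. Given any optimizer $(F^*,\Upsilon^*)$ with $m\triangleq\int x\,dF^*(x)\ne 0$, replacing $X$ by $X-m$ and $\Upsilon^*$ by $\Upsilon^*-m$ produces exactly the same quantized output, and hence the same mutual information, but reduces the second moment to $\int x^2\,dF^*-m^2<\const{P}$. This yields $C(\const{P}-m^2)\ge C(\const{P})$, and since capacity is nondecreasing in power we obtain equality. Combined with strict monotonicity of $C(\cdot)$ below the saturation value $\log 2$---which one verifies by dilating a zero-mean constellation by $\alpha>1$ and re-optimizing $\Upsilon$, thereby pushing each $Q((\Upsilon-\alpha\xi_\ell)/\sigma)$ further from $1/2$---this forces $m=0$. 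Applied next to a zero-mean optimizer whose power is strictly below $\const{P}$, the same argument delivers the tight-power identity $\sum_\ell p_\ell\xi_\ell^2=\const{P}$.

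The main obstacle I expect is making the strict-monotonicity step airtight, since rescaling the input is not a channel automorphism (the Gaussian noise has fixed variance) and one must verify that dilation \emph{strictly} improves $I$ rather than merely preserving it. This can be handled by a direct differentiation of $H_b(q)-h$ with respect to a dilation parameter, using the analyticity of the $Q$-function and the strict concavity of $H_b$ on $(0,1)$. The core of the argument, however, is the Fenchel--Bunt reduction in the first step, which collapses an infinite-dimensional optimization to a finite three-point structure and immediately delivers the bound on the number of mass points asserted in the proposition.
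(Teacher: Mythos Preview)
The paper omits its proof of this proposition, so there is nothing to compare against directly; one can only assess your argument on its own merits. Your approach is the standard one and is almost certainly what the authors had in mind: it is precisely the Carath\'eodory/Dubins-type reduction that underlies Theorem~1 of Singh--Dabeer--Madhow, which the paper cites immediately after the proposition. The central observation---that for a binary output $I(X_1;Y_1)=H_b\bigl(q(F)\bigr)-h(F)$ depends on $F$ only through two linear functionals, so that together with the power functional one is working in $\Reals^{3}$ and Fenchel--Bunt collapses any feasible $F$ to at most three atoms---is correct and is the heart of the matter. The translation argument for the zero-mean condition is likewise correct and is the natural way to exploit the joint shift-covariance of input and threshold.

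Two remarks. First, a minor technicality: because the third coordinate $x^{2}$ is unbounded, the range of $F\mapsto\bigl(q(F),h(F),r(F)\bigr)$ is a priori only contained in the \emph{closed} convex hull of your curve $\mathcal{C}_{\Upsilon}$, not the convex hull itself; one should first apply Fenchel--Bunt to distributions of bounded support and then pass to the limit using tightness (the second moments are uniformly bounded by~$\const{P}$). Second, on the substantive side, you are right that strict monotonicity of $C(\cdot)$ below $\log 2$ is the one genuinely delicate step for the tight-power identity. Your dilation heuristic is the correct instinct, but the phrase ``thereby pushing each $Q((\Upsilon-\alpha\xi_{\ell})/\sigma)$ further from $1/2$'' does not describe what happens once you re-optimize $\Upsilon$. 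A cleaner way to organize the same idea is to note that dilating both the constellation and the threshold by $\alpha>1$ is equivalent to keeping them fixed and replacing the noise variance by $\sigma^{2}/\alpha^{2}$; one is then left with showing that, for a fixed finite input alphabet and fixed threshold, $I$ is strictly decreasing in the noise variance, which your proposed direct differentiation (using analyticity of $Q$ and strict concavity of $H_b$) will handle.
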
 
\begin{proof}
Omitted.
\end{proof}
This result is consistent with \cite[Th.~1]{singhdabeermadhow09_2}, which shows that if the quantization regions of a $\const{K}$-bit quantizer partition the real line into $2^{\const{K}}$ intervals, then a discrete input distribution with not more that $2^{\const{K}}+1$ mass points achieves the capacity.

\begin{proposition}
\label{prop:quant}
A threshold quantizer is optimal, i.e., the capacity-achieving quantizer is of the form
\begin{equation}
\set{D} = \{\tilde{y}\in\Reals\colon \tilde{y}\geq\Upsilon\}, \quad \text{for some $\Upsilon\in\Reals$}
\end{equation}
where $\Upsilon$ is determined by $\const{P}$ and $\sigma^{2}$.
\end{proposition}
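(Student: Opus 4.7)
The plan is to characterize any capacity-achieving quantizer via Bauer's maximum principle combined with the vector Neyman--Pearson lemma, showing that $\set{D}^{\star}$ must be a positivity set of a linear combination of Gaussian densities, and then to use a Descartes-type bound for exponential sums to force this set to be an upper half-line.

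Fix a capacity-achieving pair $(P_{X}^{\star},\set{D}^{\star})$; by Proposition~\ref{prop:input}, $P_{X}^{\star}$ is supported on $L\leq 3$ points $\xi_{1}<\cdots<\xi_{L}$ with masses $p_{1},\ldots,p_{L}$. Writing $\phi_{\ell}(\tilde{y})$ for the $\mathcal{N}(\xi_{\ell},\sigma^{2})$ density, $q_{\ell}=\int_{\set{D}}\phi_{\ell}(\tilde{y})\d\tilde{y}$, and $q=\sum_{\ell}p_{\ell}q_{\ell}$, the objective is $I(X_{1};Y_{1})=H_{b}(q)-\sum_{\ell}p_{\ell}H_{b}(q_{\ell})$, which is a convex function of $(q_{1},\ldots,q_{L})$ since $I(X;Y)$ is convex in the channel for fixed input. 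The set $\set{Q}\subset[0,1]^{L}$ of achievable $q$-vectors as $\set{D}$ ranges over Borel subsets of $\Reals$ is convex and compact (Lyapunov's theorem, using that the $\phi_{\ell}$ are nonatomic). By Bauer's maximum principle, the maximum of $I$ over $\set{Q}$ is attained at an extreme point, and by the vector Neyman--Pearson lemma these extreme points are realized by sets of the form $\{\tilde{y}:g(\tilde{y})>0\}$ with $g=\sum_{\ell}\alpha_{\ell}\phi_{\ell}$, $\alpha\in\Reals^{L}$. Factoring out the common factor $(2\pi\sigma^{2})^{-1/2}e^{-\tilde{y}^{2}/(2\sigma^{2})}$ then shows $\set{D}^{\star}=\{\tilde{y}:h(\tilde{y})>0\}$, where $h(\tilde{y})=\sum_{\ell=1}^{L}c_{\ell}\,e^{\xi_{\ell}\tilde{y}/\sigma^{2}}$ is a real exponential sum with $c_{\ell}=\alpha_{\ell}^{\star}e^{-\xi_{\ell}^{2}/(2\sigma^{2})}$.

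By a classical Descartes-type bound for exponential sums (Laguerre), the number of real zeros of $h$ is at most the number of sign changes of $(c_{1},\ldots,c_{L})$ ordered by $\xi_{\ell}$, and thus at most $L-1\leq 2$. For $L\leq 2$ this yields at most one zero, so $\set{D}^{\star}$ is a half-line $[\Upsilon,\infty)$ or $(-\infty,\Upsilon]$; relabeling the two quantizer outputs $Y_{k}\leftrightarrow -Y_{k}$ (which preserves $I(X_{1};Y_{1})$) reduces the latter case to the former. The main obstacle is the case $L=3$, in which $h$ might a priori change sign twice and yield a band quantizer $[\Upsilon_{1},\Upsilon_{2}]$ or its complement. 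To rule this out, I would couple the extreme-point characterization above with the input KKT condition $D(P_{Y|X=\xi}\|P_{Y})-\lambda\xi^{2}\leq\mu$, which must hold with equality and local maximality in $\xi$ at each $\xi_{\ell}$; a band quantizer produces a single-humped profile of $D(P_{Y|X=\xi}\|P_{Y})$ in $\xi$ that cannot meet the downward parabola $\lambda\xi^{2}+\mu$ tangentially at three distinct points, whereas a threshold quantizer yields a monotone-then-plateau profile compatible with up to three active points. The cleanest route I would actually pursue is to show that every candidate band solution is strictly dominated by the threshold solution obtained by sending one endpoint of $[\Upsilon_{1},\Upsilon_{2}]$ to $\pm\infty$ and re-optimizing $P_{X}^{\star}$, thereby avoiding an exhaustive enumeration of KKT regimes.
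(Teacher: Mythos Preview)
The paper omits its proof of this proposition, so there is nothing to compare your argument against directly.

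Your reduction is the natural one and is correct as far as it goes: convexity of $I(X;Y)$ in the channel for fixed $P_{X}^{\star}$, Lyapunov compactness/convexity of the range $\set{Q}$, Bauer's principle to land on an extreme point, the Neyman--Pearson characterization of extreme points as level sets of linear combinations of the $\phi_{\ell}$, and the Descartes/Laguerre sign-change bound for the resulting exponential sum. This cleanly disposes of $L\le 2$.

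The genuine gap is exactly the one you flag: for $L=3$ you have not excluded a band (or band-complement) quantizer, and neither of your two sketches is close to a proof. In the KKT route, the parabola $\lambda\xi^{2}+\mu$ opens \emph{upward} (the power multiplier satisfies $\lambda>0$), not downward; the assertion that a band quantizer makes $\xi\mapsto D(P_{Y|X=\xi}\|P_{Y})$ ``single-humped'' is unproven (and even if true, a unimodal curve touching an upward parabola from below at three points is not obviously impossible); and your description of the threshold case as ``monotone-then-plateau'' is incorrect---for a threshold quantizer that divergence is U-shaped in $\xi$, vanishing where $P(Y=1\mid X=\xi)$ equals the output marginal. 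The domination route (``push one endpoint to $\pm\infty$ and re-optimize $P_{X}$'') is likewise only a program: once you re-optimize the input you have changed the benchmark, so you would need a monotonicity or exchange argument along the deformation, and none is supplied. Until one of these is made rigorous, the $L=3$ case---and hence the proposition---remains unproved in your write-up.
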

\begin{proof}
Omitted.
\end{proof}

Propositions~\ref{prop:input} and \ref{prop:quant} demonstrate that the capacity of the above channel is equal to the capacity of a discrete memoryless channel with input alphabet $\{\xi_1,\xi_2,\xi_3\}$, output alphabet $\{1,-1\}$, and channel law 
\begin{equation}
\label{eq:capacityW}
\Prob\bigl(Y=1\bigm|X=\xi_{\ell}\bigr)=Q\biggl(\frac{\Upsilon-\xi_{\ell}}{\sigma}\biggr), \quad \ell=1,2,3.
\end{equation}
It can be further shown that the supremum on the RHS of \eqref{eq:capacity} is achieved, so 
\begin{equation}
\label{eq:capacity2}
C(\const{P}) = \max_{\substack{(\vect{p},\bfxi)\in\set{P}(\const{P}),\\\Upsilon\in\Reals}} I\bigl(\vect{p},W(\Upsilon,\bfxi)\bigr)
\end{equation}
where $I(\vect{p},W)$ denotes the mutual information of a channel with channel law $W$ when the channel input is distributed according to $\vect{p}$; $\set{P}(\const{P})$ denotes the set of probability vectors $\vect{p}\in[0,1]^3$ and mass points $\bfxi\in\Reals^3$ satisfying
\begin{equation}
\label{eq:cond}
\sum_{\ell=1}^3 p_{\ell} = 1, \quad \sum_{\ell=1}^3 p_{\ell}\xi_{\ell} = 0, \quad \text{and} \quad \sum_{\ell=1}^3 p_{\ell} \xi_{\ell}^2 = \const{P};
\end{equation}
and $W(\Upsilon,\bfxi)$ denotes the channel law given by \eqref{eq:capacityW}. Thus, instead of maximizing the mutual information over all Borel sets $\set{D}$ and all probability distributions on $X_1$ satisfying $\E{X_1^2}\leq\const{P}$ \eqref{eq:capacity}, it suffices to maximize the mutual information over the four real numbers $\xi_1$, $\xi_2$, $\xi_3$, and $\Upsilon$ and the three-dimensional probability vector $\vect{p}$ satisfying \eqref{eq:cond}. 

\enlargethispage{-12cm}

\section{Summary and Conclusion}
\label{sec:summary}
It is well-known that quantizing the output of the discrete-time average-power-limited Gaussian channel using a symmetric one-bit quantizer reduces the capacity per unit-energy by a factor of $2/\pi$. We have shown that this loss can be fully recovered by allowing for asymmetric one-bit quantizers with corresponding asymmetric signal constellations. We have further shown that the capacity per unit-energy can be achieved by a simple PPM scheme. For this scheme, the error probability can be analyzed directly using the union bound and the upper bound \eqref{eq:QUB} on the $Q$-function. We thus need not resort to conventional methods used to prove coding theorems, such as the method of types, information-spectrum methods, or random coding exponents.

The above results demonstrate that the 2dB power loss incurred on the Gaussian channel with symmetric one-bit output quantization is not due to the hard-decision decoder, but due to the symmetric quantizer. In fact, if we employ an asymmetric quantizer, and if we use asymmetric signal constellations, then hard-decision decoding achieves the capacity per unit-energy of the Gaussian channel.

The above results also demonstrate that a threshold quantizer is asymptotically optimal as the signal-to-noise ratio tends to zero. We have further shown that this is true not only asymptotically: for a fixed signal-to-noise ratio, we have shown that, among all one-bit output quantizers, a threshold quantizer is optimal.

\section*{Acknowledgment}
The authors would like to thank P.~Sotiriadis, who sparked their interest in the problem of quantization.


\begin{thebibliography}{1}
\providecommand{\url}[1]{#1}
\csname url@samestyle\endcsname
\providecommand{\newblock}{\relax}
\providecommand{\bibinfo}[2]{#2}
\providecommand{\BIBentrySTDinterwordspacing}{\spaceskip=0pt\relax}
\providecommand{\BIBentryALTinterwordstretchfactor}{4}
\providecommand{\BIBentryALTinterwordspacing}{\spaceskip=\fontdimen2\font plus
\BIBentryALTinterwordstretchfactor\fontdimen3\font minus
  \fontdimen4\font\relax}
\providecommand{\BIBforeignlanguage}[2]{{%
\expandafter\ifx\csname l@#1\endcsname\relax
\typeout{** WARNING: IEEEtran.bst: No hyphenation pattern has been}%
\typeout{** loaded for the language `#1'. Using the pattern for}%
\typeout{** the default language instead.}%
\else
\language=\csname l@#1\endcsname
\fi
#2}}
\providecommand{\BIBdecl}{\relax}
\BIBdecl

\bibitem{viterbiomura79}
A.~J. Viterbi and J.~K. Omura, \emph{Principles of Digital Communication and
  Coding}.\hskip 1em plus 0.5em minus 0.4em\relax McGraw-Hill, 1979.

\bibitem{walden99}
R.~H. Walden, ``Analog-to-digital converter survey and analysis,'' \emph{IEEE
  Journal on Selected Areas in Communications}, vol.~17, no.~4, pp. 539--550,
  April 1999.

\bibitem{singhdabeermadhow09_2}
J.~Singh, O.~Dabeer, and U.~Madhow, ``On the limits of communication with
  low-precision analog-to-digital conversion at the receiver,'' \emph{IEEE
  Transactions on Communications}, vol.~57, no.~12, pp. 3629--3639, December
  2009.

\bibitem{shannon48}
C.~E. Shannon, ``A mathematical theory of communication,'' \emph{Bell System
  Technical Journal}, vol.~27, pp. 379--423 and 623--656, July and October
  1948.

\bibitem{verdu90}
S.~Verd\'u, ``On channel capacity per unit cost,'' \emph{IEEE Transactions on
  Information Theory}, vol.~36, pp. 1019--1030, September 1990.

\bibitem{gallager68}
R.~G. Gallager, \emph{Information Theory and Reliable Communication}.\hskip 1em
  plus 0.5em minus 0.4em\relax John Wiley \& Sons, 1968.

\bibitem{coverthomas91}
T.~M. Cover and J.~A. Thomas, \emph{Elements of Information Theory},
  1st~ed.\hskip 1em plus 0.5em minus 0.4em\relax John Wiley \& Sons, 1991.

\bibitem{lapidoth09}
A.~Lapidoth, \emph{A Foundation in Digital Communication}.\hskip 1em plus 0.5em
  minus 0.4em\relax Cambridge University Press, 2009.

\end{thebibliography}


\end{document}